\newcommand\version{October 4, 2010}
\newtheorem{theorem}{Theorem}[section]
\newtheorem{lemma}[theorem]{Lemma}
\newtheorem{corollary}[theorem]{Corollary}
\theoremstyle{definition}
\theoremstyle{remark}
\newtheorem{remark}[theorem]{Remark}
\numberwithin{equation}{section}
\newcommand{\const}{\mathrm{const}\ }
\renewcommand{\epsilon}{\varepsilon}
\newcommand{\kk}{\mathbf{k}}
\newcommand{\p}{\mathbf{p}}
\renewcommand{\phi}{\varphi}
\newcommand{\R}{\mathbb{R}}
\newcommand{\x}{\mathbf{x}}
\newcommand{\y}{\mathbf{y}}
\DeclareMathOperator{\Tr}{Tr}
\begin{document}

\title[Multi-polaron Systems --- \version]{Binding, Stability, and Non-binding\\ of Multi-polaron Systems}

\thanks{\copyright\, 2010 by the authors. This paper may be reproduced, in its entirety, for non-commercial purposes.}

\author[R. L. Frank]{Rupert L. Frank}
\address{Department of Mathematics,
Princeton University, Princeton, NJ 08544, USA}
\email{rlfrank@math.princeton.edu}

\author[E. H. Lieb]{Elliott H. Lieb}
\address{Departments of Mathematics and Physics, Princeton University, P.~O.~Box 708, Princeton, NJ 08542, USA}
\email{lieb@princeton.edu}

\author[R. Seiringer]{Robert Seiringer}
\address{Department of Mathematics, McGill University, 805 Sherbrooke Street West, Montreal, QC H3A 2K6, Canada}
\email{rseiring@math.mcgill.ca}

\author[L. E. Thomas]{Lawrence E. Thomas}
\address{Department of Mathematics, University of Virginia, Charlottesville, VA 22904, USA}
\email{let@virginia.edu}

\begin{abstract}
  The binding of polarons, or its absence, is an old and subtle topic. After defining the model we state some recent theorems of ours. First, the transition from many-body collapse to the existence of a thermodynamic limit for $N$ polarons occurs precisely at $U=2\alpha$, where $U$ is the electronic Coulomb repulsion and $\alpha$ is the polaron coupling constant. Second, if $U$ is large enough, there is no multi-polaron binding of any kind. We also discuss the Pekar-Tomasevich approximation to the ground state energy, which is valid for large $\alpha$. Finally, we derive exact results, not reported before, about the one-dimensional toy model introduced by E. P. Gross.
\end{abstract}

%\pacs{03.65.-w, 71.38.-k, 11.10.-z}

\maketitle

\section{Definition and previous rigorous results}

The large polaron, first considered by H. Fr\"ohlich \cite{Fr} in 1937, is a model of an electron moving in three dimensions and interacting with the quantized optical modes of a polar crystal. (It is called `large' because the size of the electronic wave function is large compared to the crystal lattice spacing, so a continuum approximation is appropriate.) It is also a simple quantum field theory model and over the years has been used as a testing ground for various approximations.

In suitable units, its Hamiltonian is
\begin{align}\label{eq:h1}
 H^{(1)} =\p^2 & + \int_{\R^3} a^\dagger(\kk)a(\kk)\,d\kk  + \frac{\sqrt{\alpha}}{\sqrt 2\, \pi} \int_{\R^3} \frac{1}{|\kk|}[a(\kk)\exp(i\kk\cdot\x) +h.c.] \,d\kk \,.
\end{align}
This Hamiltonian acts in the Hilbert space $L^2(\R^3)\otimes \mathcal F$, where $\mathcal F$
is the bosonic Fock space for the longitudinal optical modes of the
crystal, with scalar creation and annihilation operators $a^\dagger(\kk)$
and $a(\kk)$ satisfying $[a(\kk),a^\dagger(\kk')]=\delta(\kk-\kk')$.
The momentum of an electron is $\p=-i\nabla$, and the coupling constant is $\alpha>0$. (Other authors have used a different convention, where $\alpha$ is replaced by $\alpha/\sqrt2$ \cite{Fr,DePeVe}.) 

The ground state energy $E^{(1)}(\alpha)$ is the infimum of the spectrum of $H^{(1)}$.
Because of translation invariance,  $E^{(1)}(\alpha)$ cannot be expected to be an
eigenvalue,  and indeed it is not \cite{GeLo,Fr2}. The following rigorous results concerning $E^{(1)}(\alpha)$ will be important
in our analysis.
\begin{enumerate}
 \item[(i)] For all $\alpha$, 
$$
-\alpha-\alpha^2/3\leq E^{(1)}(\alpha)\leq -\alpha \,.
$$
These upper and lower bounds are in \cite{Gu,LePi,LeLoPi} and \cite{LiYa}, respectively. As a consequence, $E^{(1)}(\alpha)\sim -\alpha$ for $\alpha$ small.
\item[(ii)] Pekar \cite{Pe}, using a product function, showed that 
$$
E^{(1)}(\alpha)\leq - C_P \alpha^2 \,,
$$
for all $\alpha$. The lower bound
$$
E^{(1)}(\alpha)\geq - C_P \alpha^2 - \const \alpha^{9/5}
$$
for large $\alpha$ was proved in \cite{LiTh}. It was proved earlier in \cite{DoVa}, but without the $\alpha^{9/5}$-error estimate. Here, $C_P=0.109$ \cite{Mi} is the number determined by Pekar's variational problem for the electron density,
\begin{align}\label{eq:pekar}
C_P = \inf\left\{ \int_{\R^{3}} |\nabla \psi|^2 \,d\x - \iint_{\R^3\times\R^3} \frac{|\psi(\x)|^2\, |\psi(\y)|^2}{|\x-\y|} \,d\x\,d\y : \|\psi\|_2 =1  \right\} \,.
\end{align}
The minimizing $\psi$ is unique up to translations (and a trivial phase) \cite{Li}.
\item[(iii)] There is a representation for $E^{(1)}(\alpha)$ in
terms of path integrals.  In terms of the partition function
$Z^{(1)}(T)=\Tr \exp\big(-T H^{(1)}\big)$, one has $E^{(1)}(\alpha) =
-\lim_{T\to\infty}T^{-1}\log Z^{(1)}(T)$.  (Strictly speaking,
$Z^{(1)}(T)$ does not exist because of the translation invariance of
$H^{(1)}_0$ and the infinite number of phonon modes. These
technicalities can be handled by inserting appropriate cutoffs, to be
removed at the end of the calculation \cite{Ro,Sp}.) It was shown in
\cite{Fe} that after one integrates out the phonon variables,
$Z^{(1)}(T)$ has a functional integral representation
\begin{equation}
 \label{eq:trace}
Z^{(1)}(T) = \int d\mu \exp\left[ \frac\alpha2 \int_0^T\!\! \int_0^T \!\frac{e^{-|t-s|}\,dt\,ds}{|\x(t)-\x(s)|} \right] \,,
\end{equation}
where $d\mu$ is Wiener measure on all $T$-periodic paths $\x(t)$. (In physics notation $d\mu=\exp(-\int_0^T \dot\x(t)^2 \,dt)\, d\,\mathrm{path}$. Strictly speaking, $t-s$ has to be understood modulo $T$, but this is irrelevant as $T\to \infty$.)
\end{enumerate}

%%%%%%%%%%%%%%%%%%%%%%%%%%%%%%%%%%%%%%%%%%%%%%%%%%%%%%%%%%

\section{The multi-polaron problem and new results}

Of great physical interest is the binding energy of $N$ polarons, with Hamiltonian
\begin{align}\label{eq:ham}
 H^{(N)}_U = & \sum_{j=1}^N p_{j}^2 + \int a^\dagger(\kk)a(\kk)\,d\kk + \frac{\sqrt{\alpha}}{\sqrt2\,\pi} \sum_{j=1}^N\int \frac{1}{|\kk|}[a(\kk)\exp(i\kk\cdot\x_j)+h.c.]\,d\kk \notag \\
& +U \sum_{1\leq i<j\leq N} |\x_i -\x_j|^{-1}
\end{align}
and ground state energy $E^{(N)}_U(\alpha)$. We ignore Fermi statistics for the electrons, because its imposition changes things only quantitatively, not qualitatively. The Coulomb repulsion parameter $U\geq 0$ is equal to $e^2$. The derivation of $H^{(N)}_U$ in \cite{Fr} implies that $U>2\alpha$,  and this is crucial for thermodynamic stability, as we shall see.

The generalization of \eqref{eq:trace} is
\begin{equation}
 \label{eq:traceN}
Z^{(N)}_U(T) = \!\!\int d\mu^{(N)} \exp\left[ \frac\alpha2 \sum_{i,j} \int_0^T\!\! \int_0^T \!\frac{e^{-|t-s|}\,dt\,ds}{|\x_i(t)-\x_j(s)|} 
- U \sum_{i<j} \int_0^T \!\frac{dt}{|\x_i(t) -\x_j(t)|} \right] \,,
\end{equation}
where $d\mu^{(N)}$ is Wiener measure on all $T$-periodic paths $(\x_1(t),\ldots,\x_N(t))$. This is relevant for us since $E^{(N)}_U(\alpha) = -\lim_{T\to\infty}T^{-1}\log Z^{(N)}_U(T)$.

The generalization of the Pekar approximation \eqref{eq:pekar} to the $N$-electron case is the minimization of the following Pekar-Tomasevich functional for normalized functions $\psi$ on $\R^{3N}$,
\begin{align*}\nonumber
\sum_{i=1}^N \int_{\R^{3N}} |\nabla_i \psi|^2 \,dX + U \sum_{i<j} \int_{\R^{3N}} \frac{|\psi(X)|^2}{|\x_i-\x_j|} \,dX 
- \alpha \iint_{\R^3\times\R^3} \frac{\rho_\psi(\x)\, \rho_\psi(\y)}{|\x-\y|} \,d\x\,d\y \,,
\end{align*}
where $dX=\prod_{k=1}^N d\x_k$. The density $\rho_\psi$ of $\psi$ is defined as usual by
\begin{equation*}
\rho_\psi(\x) = \sum_{i=1}^N \int_{\R^{3(N-1)}} |\psi(\x_1,\ldots,\x,\ldots,\x_N)|^2 \,d\x_1\cdots \widehat{d\x_i} \cdots d\x_N
\end{equation*}
with $\x$ at the $i$-th position, and $\widehat{d\x_i}$ meaning that $d\x_i$ has to be omitted in the product $\prod_{k=1}^N d\x_k$. This minimization problem is obtained from the original problem of minimizing $\langle\Psi,H^{(N)}_U\Psi\rangle$ by restricting the allowed $\psi$'s to be of the form $\psi\otimes \Phi$, where $\psi\in L^2(\R^{3N})$, $\Phi\in\mathcal F$, and both $\psi$ and $\Phi$ are normalized. Since the Pekar-Tomasevich functional is the result of a variational calculation, its energy gives an upper bound to the ground state energy $E^{(N)}_U(\alpha)$.

\subsection{Binding of multi-polaron systems}

We first consider the bipolaron binding energy $\Delta E_U(\alpha)= 2 E^{(1)}(\alpha)-E^{(2)}_U(\alpha)$.
For some time this was thought to be zero for all $U\geq 2\alpha$, on the basis of an 
inadequate variational calculation, but it is now known \cite{DePeVe}
to be positive for some $U>2\alpha$. The first question we address is
whether $\Delta E_U(\alpha)=0$ for $U$ sufficiently large.
It is understood that the effective interaction induced by the phonon
field for two polarons at large distances $d$ is approximately
Coulomb-like $-2\alpha/d$, but this alone does not preclude binding. (The reason for $2\alpha\cdot \mathrm{distance}^{-1}$ can be seen from the Wiener integral representation \eqref{eq:traceN}, where there is a factor $\alpha/2$, but the pair $(i,j)$ appears twice, and the integral $\int_\R e^{-|t-s|}\,ds = 2$.)
The known existence of bipolarons for some $U>2\alpha$ is an effect of
correlations. It is \emph{a priori} conceivable that correlations lead to an
effective attraction that is stronger than Coulomb at large distances.
If it were, for example, equal to $(2\alpha/d)\log(\log(\log(d)))$,
then this minuscule perturbation of Coulomb's law, which would be
virtually undetectable by a variational calculation, would result in
binding for \emph{all} $U$. The absence of binding is a problem that
has resisted a definitive resolution for many years. The following was proved in \cite{FrLiSeTh0,FrLiSeTh}:

\begin{theorem}[\textbf{Absence of binding for bipolarons}]\label{thm2}
 Let $N=2$. For some constant  $C<26.6$,
\begin{equation}
 \label{eq:bipol}
E^{(2)}_U(\alpha) = 2 E^{(1)}(\alpha)
\end{equation}
whenever $U\geq 2C\alpha$.
\end{theorem}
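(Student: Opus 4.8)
The plan is to show that no binding occurs for two polarons once $U$ is large enough by comparing the two-polaron system to two decoupled one-polaron systems directly at the level of the energy functional, using the path-integral representation \eqref{eq:traceN}. Since $E^{(2)}_U(\alpha)\le 2E^{(1)}(\alpha)$ always holds (by tensoring two one-polaron trial states and noting that the Coulomb term between widely separated packets is arbitrarily small), the content is the reverse inequality $E^{(2)}_U(\alpha)\ge 2E^{(1)}(\alpha)$ for $U\ge 2C\alpha$. Equivalently, in terms of partition functions, one wants to show $Z^{(2)}_U(T)\le \big(Z^{(1)}(T)\big)^2 e^{o(T)}$ as $T\to\infty$. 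Writing out the exponent in \eqref{eq:traceN}, the cross terms between the two paths $\x_1,\x_2$ are the attractive double-integral $\alpha\int_0^T\!\!\int_0^T e^{-|t-s|}|\x_1(t)-\x_2(s)|^{-1}\,dt\,ds$ (the $i\ne j$ part, counted with its factor $\alpha/2$ appearing twice, hence a net $\alpha$) minus the repulsive single-integral $U\int_0^T |\x_1(t)-\x_2(t)|^{-1}\,dt$. So the heart of the matter is a pointwise (in path space) functional inequality: for $U$ large enough,
\begin{equation}
\label{eq:keyineq}
\alpha\int_0^T\!\!\int_0^T \frac{e^{-|t-s|}\,dt\,ds}{|\x_1(t)-\x_2(s)|} \le U\int_0^T \frac{dt}{|\x_1(t)-\x_2(t)|}
\end{equation}
should hold, or at least hold after absorbing a controllable remainder into the diagonal ($i=j$) self-energy terms.

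The key step is therefore to bound the nonlocal attraction by the local repulsion. The natural tool is a Schur-type / Cauchy–Schwarz argument combined with the elementary kernel bound
\[
\frac{1}{|\x_1(t)-\x_2(s)|}\ \le\ \frac12\left(\frac{1}{|\x_1(t)-\x_2(t)|}\cdot f(t,s) + \frac{1}{|\x_1(s)-\x_2(s)|}\cdot f(s,t)\right)
\]
is \emph{not} true pointwise; instead one should argue more cleverly. A cleaner route: use that $\tfrac12 e^{-|t-s|}$ is a probability density in $s$ for each $t$, write $1/|\x_1(t)-\x_2(s)|$ and compare it to the values of $1/|\x_1(\cdot)-\x_2(\cdot)|$ along the diagonal by interpolating the endpoints $1/|\x_1(t)-\x_2(t)|$ and $1/|\x_1(s)-\x_2(s)|$ through the triangle inequality $|\x_1(t)-\x_2(s)|\ge \big||\x_1(t)-\x_2(t)| - |\x_2(t)-\x_2(s)|\big|$, and then control the Brownian displacement $|\x_2(t)-\x_2(s)|$ over the time lag $|t-s|$ using that the weight $e^{-|t-s|}$ penalizes large lags. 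The constant $C<26.6$ will emerge from optimizing this chain of estimates — a numerical, not conceptual, task. Concretely I expect the argument to reduce, via the Feynman–Kac formula applied in reverse, to a \emph{Hardy-type operator inequality}: the two-body part of the Pekar-type kernel, $\alpha\,(p_1^2+p_2^2 + 1)^{-1}$-dressed attraction, is dominated by $2C\alpha\,|\x_1-\x_2|^{-1}$, which one checks on $L^2(\R^3)$ in the relative coordinate using the sharp Hardy inequality $|\x|^{-2}\le 4p^2$ together with the resolvent bound $|\x-\y|^{-1}\le \pi(p^2+\mu^2)^{1/2}/(\cdots)$ — this is where the explicit numerical constant is pinned down.

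The main obstacle is that \eqref{eq:keyineq} is genuinely false for individual paths: a path pair can have $\x_1(t)$ far from $\x_2(s)$ for $s$ near $t$ yet close for $s$ far from $t$, so a purely pathwise comparison cannot work, and one must exploit the Gaussian (Wiener-measure) structure — i.e. the fact that $\x_2(s)$ cannot travel far from $\x_2(t)$ in a short time without a large kinetic-energy cost $\int \dot\x^2$. Turning this probabilistic smallness into a rigorous operator bound that survives the $T\to\infty$ limit, uniformly, is the delicate part; I would handle it by passing from the path integral back to the Hamiltonian picture and proving the operator inequality $H^{(2)}_U \ge 2E^{(1)}(\alpha)$ directly, splitting $H^{(2)}_U$ as a sum of two copies of $H^{(1)}$ (each carrying half the phonon field and a full electron kinetic term) plus the term $U|\x_1-\x_2|^{-1}$ minus the phonon cross-interaction, and then showing the last two terms together are nonnegative for $U\ge 2C\alpha$ via a Hardy inequality applied to the phonon-field-dressed Coulomb attraction. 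The numerical value $26.6$ is whatever the resulting Hardy constant forces.
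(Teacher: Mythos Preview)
Your setup is fine: the easy inequality $E^{(2)}_U\le 2E^{(1)}$ is correct, and you correctly isolate the cross term in the path integral and observe that the pathwise bound \eqref{eq:keyineq} fails. The gap is in your proposed fix. You want to write $H^{(2)}_U$ as ``two copies of $H^{(1)}$ plus $U|\x_1-\x_2|^{-1}$ minus the phonon cross-interaction'' and then kill the last two terms together by a Hardy inequality. But there is no well-defined operator ``phonon cross-interaction'' on $L^2(\R^6)\otimes\mathcal F$ that sits between the electrons and can be compared pointwise to $|\x_1-\x_2|^{-1}$: the two electrons couple to the \emph{same} phonon field, and once you complete the square or integrate phonons out the induced attraction is genuinely nonlocal in time (this is exactly the $e^{-|t-s|}$ kernel you wrote down). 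At short distances this attraction is \emph{not} bounded by a multiple of $|\x_1-\x_2|^{-1}$ --- if it were, binding could never occur for any $U>2\alpha$, contradicting the known existence of bipolarons slightly above $2\alpha$. So a single Hardy-type operator inequality cannot close the argument; the short-distance and long-distance regimes behave qualitatively differently and must be treated separately.

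That separation is precisely what the paper does and what your proposal is missing. The paper localizes the relative coordinate $|\x_1-\x_2|$ into dyadic shells. In the outer shells ($k\ge 1$) the particles are further localized into disjoint boxes, and \emph{there} the induced attraction really is Coulomb-like, $\sim -c_3\alpha/d$, so $U/d$ beats it once $U\gtrsim\alpha$ plus localization errors. In the inner region ($k=0$) there is no minimal separation and no Coulomb-type control on the attraction; instead one uses an \emph{a priori} energy bound $E^{(2)}_0(\alpha)\ge 2E^{(1)}(\alpha)-7\alpha^2/3$ (proved via Schwarz in the path integral to reduce to a positronium-like problem, together with the one-polaron bounds $-\alpha-\alpha^2/3\le E^{(1)}(\alpha)\le -\alpha$), and the Coulomb repulsion $U/\ell$ absorbs the deficit $7\alpha^2/3$. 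Balancing the two regimes fixes $\ell\sim\alpha^{-1}$ and yields the linear threshold $U\ge 2C\alpha$ with $C<26.6$. Your sketch has none of this localization machinery and no substitute for the short-distance estimate; the Hardy-inequality route, as written, does not lead to a proof.
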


The constant $26.6$ vastly exceeds the current, non-rigorous estimates of about $1.15$ \cite{VSPD,fomin}, so it is an \emph{open problem} to find a more accurate rigorous bound.

The existence of a critical repulsion strength for a bipolaron is consistent with the idea that the attractive interaction induced by the field is Coulomb-like, and therefore one expects that there is an $N$-independent $U_c(\alpha)$ such that there is no binding of any kind when $U>U_c(\alpha)$. This was proved in \cite{FrLiSeTh0,FrLiSeTh} as well.

\begin{theorem}[\textbf{Absence of binding for {\mathversion{bold}$N$} polarons}]\label{thm:nobindingN}
 For given $\alpha>0$ there is a finite $U_c(\alpha)>2\alpha$ such that
\begin{equation}
 \label{eq:bindabs}
E^{(N)}_U(\alpha) = N E^{(1)}(\alpha)
\quad\text{for all}\ N\geq 2
\end{equation}
whenever $U\geq U_c(\alpha)$.
\end{theorem}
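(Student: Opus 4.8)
The plan is to bootstrap Theorem \ref{thm:nobindingN} from the bipolaron result of Theorem \ref{thm2}, using the fact that the phonon-induced interaction is, for our purposes, dominated by a pair potential. First I would fix $\alpha$ and set $C$ to be the constant from Theorem \ref{thm2}, so that $\Delta E_{U}(\alpha)=0$ whenever $U\geq 2C\alpha$. The goal is to show that the same threshold $U_c(\alpha)=2C\alpha$ forces $E^{(N)}_U(\alpha)=NE^{(1)}(\alpha)$ for every $N$. Since the Pekar–Tomasevich variational principle already gives $E^{(N)}_U(\alpha)\leq NE^{(1)}(\alpha)$ (take $\psi$ to be a product of $N$ far-separated copies of the one-polaron minimizer and let the separations go to infinity), the real content is the matching lower bound $E^{(N)}_U(\alpha)\geq NE^{(1)}(\alpha)$.

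For the lower bound I would argue by induction on $N$, the base case $N=2$ being Theorem \ref{thm2}. The key structural input is a representation of $H^{(N)}_U$ that makes the field coupling look like a sum of pair interactions: integrating out the phonons in the functional-integral formula \eqref{eq:traceN}, the attractive term is $\tfrac\alpha2\sum_{i,j}\int_0^T\!\int_0^T e^{-|t-s|}|\x_i(t)-\x_j(s)|^{-1}\,dt\,ds$, which splits into a sum over unordered pairs $\{i,j\}$ plus the diagonal self-energy terms $i=j$. The diagonal terms reassemble, pair by pair, into copies of the one-polaron action appearing in \eqref{eq:trace}; the off-diagonal cross terms for the pair $\{i,j\}$ are exactly the cross terms appearing in the two-polaron action $Z^{(2)}_U$ for particles $i$ and $j$, while the Coulomb repulsion $-U\sum_{i<j}\int_0^T|\x_i(t)-\x_j(t)|^{-1}\,dt$ likewise distributes one $-U$-interaction to each pair. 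Thus one can write the $N$-particle action as a sum over the $\binom N2$ pairs of (a suitably normalized fraction of) two-polaron actions, arranged so that each single-particle self-energy is counted with the right total weight. Concretely, I would seek a decomposition in which the ground-state energy obeys $E^{(N)}_U(\alpha)\geq \sum_{i<j}\big(E^{(2)}_{U}(\alpha)-\lambda\big)/(N-1)$ type bound, or more simply use that dropping all cross terms between one distinguished particle and the rest (which are handled by the $N=2$ estimate applied to each such pair) decouples that particle at the cost of no energy, giving $E^{(N)}_U(\alpha)\geq E^{(1)}(\alpha)+E^{(N-1)}_U(\alpha)$, and then close the induction.

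The honest way to run this, avoiding delicate manipulations of divergent partition functions, is to work directly with the Hamiltonian via a correlation-inequality/localization argument rather than with $Z^{(N)}_U(T)$. One would want an operator inequality of the form $H^{(N)}_U \geq \sum_{i<j} a_{ij} H^{(2)}_{U,\,ij}$ (with the $\{i,j\}$-pair Hamiltonian acting on the appropriate two-body and full Fock-space factors and $\sum_j a_{ij}=1$ for each $i$, so that the kinetic and self-field terms are reproduced exactly), from which $E^{(N)}_U(\alpha)\geq \sum_{i<j} a_{ij}\,E^{(2)}_U(\alpha)$ follows; combined with $E^{(2)}_U(\alpha)=2E^{(1)}(\alpha)$ in the regime $U\geq 2C\alpha$ and the identity $\sum_{i<j}2a_{ij}=N$, this yields $E^{(N)}_U(\alpha)\geq NE^{(1)}(\alpha)$ directly, bypassing the induction. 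Establishing such an operator decomposition of the field term requires splitting the one-body couplings $\sum_j v(\x_j-\cdot)$ among the pairs — this is where a genuine inequality (as opposed to an identity) is needed, because the field energy is quadratic in the couplings, and the Cauchy–Schwarz/convexity step that distributes $\big\|\sum_j v_j\big\|^2$-type quantities among pairs is exactly what produces the constant $C$ rather than $1$.

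I expect the main obstacle to be precisely this last step: controlling the field (phonon) self-energy, which is not a sum of pair terms but a square of a sum, and showing that the excess over the pairwise decomposition is nonnegative — or, equivalently, making rigorous the passage from the pair-decomposed functional integral back to an operator bound without the cutoffs reintroducing spurious binding. A secondary technical point is the usual one for these models: $E^{(1)}(\alpha)$ and $E^{(N)}_U(\alpha)$ are bottoms of continuous spectrum (no ground-state eigenvector), so all comparisons must be done at the level of energies via trial states and the variational principle, and the existence of the thermodynamic/translation limit in \eqref{eq:traceN} must be invoked carefully. Once the pairwise lower bound $E^{(N)}_U(\alpha)\geq NE^{(1)}(\alpha)$ is in hand for $U\geq 2C\alpha$, combining it with the product trial-state upper bound closes the proof, and one reads off $U_c(\alpha)=2C\alpha$ with the $C<26.6$ of Theorem \ref{thm2}; in particular $U_c(\alpha)>2\alpha$ as claimed.
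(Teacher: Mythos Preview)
Your proposal has a genuine gap, and in fact its conclusion contradicts something the paper states explicitly. You claim that the bipolaron threshold $U_c(\alpha)=2C\alpha$ from Theorem~\ref{thm2} already suffices for all $N$, hence that $U_c(\alpha)$ is linear in $\alpha$. But the paper says, just after Theorem~\ref{thm:nobindingN}, that ``our $N$-polaron bound is linear in $\alpha$ for large $\alpha$, we have not achieved this linear bound for small $\alpha$ and this remains an \emph{open problem}.'' So either you have solved an open problem or the argument breaks; it is the latter.

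The breakdown is at the pairwise operator inequality $H^{(N)}_U\ge\sum_{i<j}a_{ij}H^{(2)}_{U,ij}$. There is a \emph{single} phonon field shared by all $N$ electrons, and the field energy $\int a^\dagger(\kk)a(\kk)\,d\kk$ is one term, not a sum over pairs. If you take $a_{ij}=1/(N-1)$ so that kinetic energies and couplings add up correctly, the right side carries $\tfrac{N}{2}$ copies of the field energy, not one. Equivalently, writing $H^{(N)}_U$ exactly as $\sum_{i<j}\tfrac1{N-1}\tilde H_{ij}$ forces each $\tilde H_{ij}$ to have only $\tfrac2N H_f$; rescaling the field then turns $\tilde H_{ij}$ into a bipolaron with effective coupling $N\alpha/2$ and repulsion $(N-1)U$. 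Even if that bipolaron does not bind, the lower bound you obtain is $E^{(N)}_U(\alpha)\ge N E^{(1)}(N\alpha/2)$, which is \emph{strictly below} $N E^{(1)}(\alpha)$ since $E^{(1)}$ is strictly decreasing. The same rescaling obstruction appears in the path-integral version: any H\"older/Jensen splitting of the exponent in \eqref{eq:traceN} into pair pieces multiplies the self-energy terms $A_{ii}$ by an $N$-dependent factor, so the comparison is to single polarons at the wrong coupling. Your sentence ``the Cauchy--Schwarz/convexity step \ldots is exactly what produces the constant $C$'' is not correct: the constant $C$ in Theorem~\ref{thm2} comes from the spatial localization argument of Section~3, not from a decomposition of the field, and nothing in that proof furnishes the pair inequality you need. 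The actual proof (in \cite{FrLiSeTh0,FrLiSeTh}, only alluded to here) extends the multi-scale localization of Section~3 to $N$ particles rather than reducing algebraically to the $N=2$ case.
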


\begin{remark}
 If $U> U_c(\alpha)$, then given \eqref{eq:bindabs} and any normalized $\psi$
\begin{equation}
 \label{eq:energybound}
\left\langle \psi \left| H_U^{(N)} \right|\psi\right\rangle \geq N E^{(1)}(\alpha) + (U-U_c(\alpha)) \ \Big\langle \psi \Big|  \sum_{i<j} |\x_i-\x_j|^{-1} \Big|\psi \Big\rangle \,.
\end{equation}
This inequality gives a quantitative estimate of the energy penalty needed to bring two or more particles within a finite distance of each other. In particular, it implies that for $U>U_c(\alpha)$ there cannot be a normalizable ground state, even in a fixed momentum sector. Inequality \eqref{eq:energybound} is not only true for our bound on $U_c(\alpha)$, but also for the  (unknown) exact value of the critical repulsion parameter.
\end{remark}

We state Theorem \ref{thm2} separately for two reasons: One is that the proof is easier than for the general $N$ case. The second is that we have an upper bound on $U_c(\alpha)$ that is linear in $\alpha$. While our $N$-polaron bound is linear in $\alpha$ for large $\alpha$, we have not achieved this linear bound for small $\alpha$ and this remains an \textit{open problem}.

\subsection{Thermodynamic stability of multi-polaron systems}

The second problem we consider is the existence of the thermodynamic
limit. For large $N$, physical intuition suggests that $E^{(N)}_U(\alpha)\sim
-\const N$. This supposition is known to be false if $U<2\alpha$.
Indeed, it was shown in \cite{GrMo} that, even with the Pauli
principle, $E^{(N)}_U(\alpha)\sim -\const N^{7/3}$ when $U<2\alpha$. Absent
the Pauli principle, $E^{(N)}_U(\alpha)$ would behave even worse, as $-\const
N^3$. It is also known \cite{GrMo} that $E^{(N)}_U(\alpha)\geq -\const N^{2}$
if $U>2\alpha$.  The latter bound ought to be $-\const N$ instead, and in \cite{FrLiSeTh0,FrLiSeTh} we proved that this is indeed the case for all $U> 2\alpha$.

\begin{theorem}[\textbf{Thermodynamic stability for {\mathversion{bold} $U>2\alpha$}}] \label{thm:stabalpha}
 For given $U>2\alpha>0$, $N^{-1} E^{(N)}_U(\alpha)$ is bounded independently of $N$.
\end{theorem}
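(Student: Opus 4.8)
The plan is to work with the functional integral representation \eqref{eq:traceN} and bound $Z^{(N)}_U(T)$ from above by a quantity whose logarithm grows linearly in $N$ uniformly in $T$; then $E^{(N)}_U(\alpha) = -\lim_{T\to\infty}T^{-1}\log Z^{(N)}_U(T) \geq -\const N$ follows. The key algebraic observation is that the attractive double sum in the exponent of \eqref{eq:traceN} can be split into diagonal ($i=j$) and off-diagonal terms. The diagonal terms $\tfrac\alpha2\sum_i \int_0^T\!\int_0^T e^{-|t-s|}|\x_i(t)-\x_i(s)|^{-1}\,dt\,ds$ are the genuinely ``self-interaction'' pieces that, via \eqref{eq:trace}, produce one copy of $E^{(1)}(\alpha)$ per particle; the cross terms $i\neq j$ must be controlled by the Coulomb repulsion. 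The first step, then, is the pointwise-in-path inequality that, when $U>2\alpha$, for every pair $i\neq j$,
\begin{equation*}
 \frac\alpha2 \int_0^T\!\!\int_0^T \frac{e^{-|t-s|}\,dt\,ds}{|\x_i(t)-\x_j(s)|} + \frac\alpha2 \int_0^T\!\!\int_0^T \frac{e^{-|t-s|}\,dt\,ds}{|\x_j(t)-\x_i(s)|} - U\int_0^T \frac{dt}{|\x_i(t)-\x_j(t)|} \leq \text{(controlled error)}.
\end{equation*}
Since $\int_\R e^{-|t-s|}\,ds = 2$, the two attractive cross terms carry a total weight $2\alpha$ against a repulsive weight $U>2\alpha$, so heuristically the repulsion wins; the technical content is to make this rigorous when the attraction is spread out in time (the $s$-integral) while the repulsion is local in time. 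I expect this requires a Schur-test/Cauchy--Schwarz argument writing $e^{-|t-s|}$ as a convolution kernel, or the elementary bound $|\x_i(t)-\x_j(s)|^{-1} \leq$ an average of $|\x_i(t)-\x_i(s)|^{-1}$-type and $|\x_i(s)-\x_j(s)|^{-1}$-type terms via the triangle inequality, trading the genuinely mixed-time attractions for (a) more diagonal self-energy, which is harmless, and (b) equal-time cross terms that the Coulomb repulsion absorbs.

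The second step is bookkeeping: after the pairwise reduction, the exponent in \eqref{eq:traceN} is bounded above by $\sum_{i=1}^N \big[\tfrac{\alpha'}2 \int_0^T\!\int_0^T e^{-|t-s|}|\x_i(t)-\x_i(s)|^{-1}\,dt\,ds\big]$ plus an error, where the effective self-coupling $\alpha'$ is some explicit multiple of $\alpha$ (possibly larger than $\alpha$ because of the redistribution in step one), and the error is either nonpositive or bounded by $\const\,NT$. Because $d\mu^{(N)}$ is a product measure over the $N$ independent Brownian paths, the integral of the product of single-particle exponentials factorizes, giving $Z^{(N)}_U(T) \leq e^{\const NT}\big(Z^{(1)}_{\text{eff}}(T)\big)^N$ where $Z^{(1)}_{\text{eff}}$ is the one-polaron partition function at coupling $\alpha'$. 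Taking $-T^{-1}\log$ and $T\to\infty$ yields $E^{(N)}_U(\alpha) \geq N E^{(1)}(\alpha') - \const N$, and since $E^{(1)}(\alpha')$ is a finite constant depending only on $\alpha$ (not $N$) by item (i) or (ii) of Section 1, the theorem follows.

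The main obstacle is step one: controlling the time-nonlocal attraction by the time-local Coulomb repulsion without losing a factor that grows with $N$ or with $T$. The naive triangle-inequality split $|\x_i(t)-\x_j(s)|^{-1} \leq |\x_i(t)-\x_i(s)|^{-1}$ is false in general, so one needs a genuine inequality — for instance, exploiting that $\tfrac12(a^{-1}+b^{-1}) \geq (\tfrac{a+b}{2})^{-1} \geq$ (something usable) only after a further convexity or rearrangement step, or else a direct operator/quadratic-form estimate showing that, as bilinear forms in the path measure, $2\alpha \times (\text{mixed-time cross attraction}) \leq U\times(\text{local repulsion}) + \alpha\times(\text{extra self-energy})$. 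A clean route may be to first integrate out $s$: bound $\int_0^T e^{-|t-s|}|\x_i(t)-\x_j(s)|^{-1}\,ds$ and note it is dominated, after symmetrizing in $i\leftrightarrow j$, by a quantity comparable to $|\x_i(t)-\x_j(t)|^{-1}$ up to a remainder controlled by the regularity of Brownian paths on the $O(1)$ time-scale set by $e^{-|t-s|}$ — the key point being that this remainder is extensive ($O(NT)$), not worse. Once that estimate is in hand, the rest is the routine factorization described above, so I would concentrate all the effort on making the $U>2\alpha$ threshold appear sharply in this single pairwise bound.
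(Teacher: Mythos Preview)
The paper does not actually prove Theorem~\ref{thm:stabalpha}; it only states the result and refers to \cite{FrLiSeTh0,FrLiSeTh} for the argument, so there is no in-paper proof to compare against directly. I can nevertheless assess your strategy against what is known.

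Your heuristic that the phonon-mediated cross attraction carries total weight $2\alpha$ against a Coulomb repulsion of weight $U$ is exactly right, and it is the reason $U=2\alpha$ is the sharp threshold. The gap is in step one, and it is genuine. The pathwise inequality you are aiming for --- bounding $\alpha\int_0^T\!\int_0^T e^{-|t-s|}\,|\x_i(t)-\x_j(s)|^{-1}\,dt\,ds$ by $U\int_0^T |\x_i(t)-\x_j(t)|^{-1}\,dt$ plus an $O(T)$ remainder --- is simply false. Take $\x_j(\cdot)=\x_i(\cdot-1)$, a unit time translate of $\x_i$; then $|\x_i(t)-\x_j(t+1)|=0$, so the mixed-time integral diverges, while the equal-time repulsion $\int_0^T|\x_i(t)-\x_i(t-1)|^{-1}\,dt$ is almost surely finite for three-dimensional Brownian paths. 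None of your proposed repairs helps: convexity of $r\mapsto r^{-1}$ goes the wrong way for an AM--GM split; the triangle inequality gives a \emph{lower} bound on $|\x_i(t)-\x_j(s)|$, hence an \emph{upper} bound on the denominator only after subtracting $|\x_j(t)-\x_j(s)|$, which is $O(1)$, not small, on the unit time scale set by $e^{-|t-s|}$; and the bad paths have positive Wiener measure and enter $Z^{(N)}_U$ through an exponential, so a ``small set'' argument cannot rescue the bound. The Schwarz--Jensen trick used in Step~4 of Section~3 for the bipolaron does convert mixed-time to equal-time, but at the cost of decoupling the attraction from the repulsion and doubling couplings; iterated over $N(N-1)/2$ pairs it does not give a linear-in-$N$ bound.

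The proof in \cite{FrLiSeTh} sidesteps the time-nonlocality entirely by working at the operator level rather than in the functional integral. Completing the square in the phonon operators (equivalently, shifting $a(\kk)$ by the classical field generated by the electrons) produces the induced attraction as an \emph{equal-time} multiplication operator $-2\alpha\sum_{i<j}|\x_i-\x_j|^{-1}$, plus ultraviolet-divergent diagonal self-energies that are rendered finite by borrowing a fraction of the kinetic energy via single-polaron estimates of Lieb--Yamazaki type \cite{LiYa}. Once the attraction is equal-time, the hypothesis $U>2\alpha$ makes the net pair interaction nonnegative, and what remains is a sum of $N$ one-body terms bounded below by $-\const N$. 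If you want to salvage a path-integral proof, you would effectively have to undo the phonon integration that produced \eqref{eq:traceN} and re-do this completion of the square inside the functional integral --- at which point the operator argument is both shorter and sharper.
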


Our lower bound on $N^{-1} E^{(N)}_U(\alpha)$ goes to $-\infty$ as
$U\searrow2\alpha$, but we are not claiming that this reflects the
true state of affairs. Whether $\lim_{N\to\infty} N^{-1}
E^{(N)}_{2\alpha}(\alpha)$ is finite or not remains an \textit{open
  problem}. There are partial results in the Pekar-Tomasevich approximation \cite{GrMo}.

The linear lower bound from Theorem \ref{thm:stabalpha}, together with the sub-additivity of the energy \cite{GrMo}, \cite[Sec. 14.2]{LiSe}, i.e., 
\begin{equation}\label{eq:subadd}
E^{(N+M)}_U(\alpha) \leq E^{(N)}_U(\alpha)+ E^{(M)}_U(\alpha)\,,
\end{equation}
implies:

\begin{corollary}[\textbf{Thermodynamic limit for {\mathversion{bold} $U>2\alpha$}}] \label{thm:tlalpha}
 For given $U>2\alpha>0$, $\lim_{N\to\infty} N^{-1} E^{(N)}_U(\alpha)$ exists.
\end{corollary}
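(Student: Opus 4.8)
The plan is to deduce the statement from Fekete's subadditive lemma, using the two ingredients already at hand: the subadditivity \eqref{eq:subadd} and the linear lower bound of Theorem~\ref{thm:stabalpha}. I would write $a_N := E^{(N)}_U(\alpha)$ and observe that \eqref{eq:subadd} says exactly that $(a_N)_{N\ge1}$ is a subadditive sequence, $a_{N+M}\le a_N+a_M$, while Theorem~\ref{thm:stabalpha} says that $a_N/N$ is bounded below uniformly in $N$, say $a_N\ge -cN$ for some finite $c=c(U,\alpha)$ (this is where the hypothesis $U>2\alpha$ enters).

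The main step is the standard Fekete argument. Fix $M\ge1$ and, for $N\ge M$, write $N=qM+r$ with $q\ge1$ and $0\le r<M$ (setting $a_0:=0$). Iterating \eqref{eq:subadd} gives $a_N\le q\,a_M+a_r$. Dividing by $N$ and letting $N\to\infty$ with $M$ held fixed — so that $q/N\to1/M$ while $a_r/N\to0$, since $a_r$ ranges over the finite set $\{a_0,\dots,a_{M-1}\}$ — yields $\limsup_{N\to\infty}a_N/N\le a_M/M$. As $M$ was arbitrary, $\limsup_{N\to\infty}a_N/N\le\inf_{M\ge1}a_M/M$, and the reverse inequality $\liminf_{N\to\infty}a_N/N\ge\inf_{M\ge1}a_M/M$ is trivial. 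Hence $\lim_{N\to\infty}a_N/N$ exists and equals $\inf_{M\ge1}a_M/M\in[-\infty,\infty)$.

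Finally I would exclude the value $-\infty$, which is precisely what Theorem~\ref{thm:stabalpha} provides: from $a_N\ge-cN$ one gets $\inf_{M\ge1}a_M/M\ge-c>-\infty$, so the limit exists and is finite. I do not expect any genuine obstacle in this deduction; all the substance sits in Theorem~\ref{thm:stabalpha}, whose proof is the hard part carried out elsewhere. For completeness one should recall that the subadditivity \eqref{eq:subadd} itself is elementary: from near-optimal trial states for the $N$- and $M$-polaron problems one builds a trial state for the $(N+M)$-polaron problem by translating one cluster far from the other and taking the product; its energy is $a_N+a_M+o(1)$ as the separation tends to infinity, because both the phonon-mediated attraction and the Coulomb repulsion between the two clusters, each decaying like the reciprocal of the separation, disappear in that limit. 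See \cite{GrMo}, \cite[Sec. 14.2]{LiSe}.
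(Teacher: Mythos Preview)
Your proposal is correct and matches the paper's own deduction: the paper simply states that the linear lower bound of Theorem~\ref{thm:stabalpha} together with the subadditivity \eqref{eq:subadd} implies the corollary, which is exactly the Fekete argument you spell out. If anything, you give more detail than the paper does, including the standard trial-state justification of subadditivity that the paper merely cites.
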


For $U$ in the range $2\alpha<U<U_c(\alpha)$, there are bound states
of an undetermined nature. Does the system become a gas of bipolarons, or does it coalesce into a true $N$-particle bound state? If the latter, does this state exhibit a periodic structure, thereby forming a super-crystal on top of the underlying lattice of atoms? This is perhaps the physically most interesting \textit{open problem}. While particle statistics does not play any role for our main results, the answer to this question will crucially depend on particle statistics (Bose or Fermi) \cite{SVPD1,SVPD2}.

%%%%%%%%%%%%%%%%%%%%%%%%%%%%%%%%%%%%%%%%%%%%%%%%%%%%%%%%%%%

\section{Absence of bipolaron binding}

In order to give the flavor of our methods, we sketch the proof of Theorem \ref{thm2}, as given in \cite{FrLiSeTh0}, \copyright Amer. Phys. Soc. The proofs of the other two theorems are also sketched in \cite{FrLiSeTh0}.

 The proof of Theorem 1 is conveniently structured in 4 steps.

\emph{Step 1. Partition of the interparticle distance:} We fix a
length $\ell$, whose value will later be chosen proportional to
$\alpha^{-1}$, and partition the relative distance $r=|\x_1-\x_2|$
between the particles into spherical shell-like regions of radial size
$2^{k-1}\ell\leq r\leq 2^k\ell$ with $k=1,2,\ldots$. This partitioning
is one of the key points of our analysis. In addition there is the
$k=0$ region, where the particle separation is between zero and
$\ell$. Because of the uncertainty principle these regions have to
overlap a bit, but this can be easily handled, and we ignore it for the sake of
simplicity. There is a kinetic energy cost for localizing the
particles according to this partition, which is $c_1 2^{-2k} \ell^{-2}$
in the shell $k$. In the next step we look at the energy of the
particles localized to one of these shell-like regions.

\emph{Step 2. Further localization for well-separated particles:} For
$k\geq 1$ we further localize the particles into individual boxes of
size $2^{k-3}\ell$. This costs another localization error $c_2 2^{-2k} \ell^{-2}$. Because the separation exceeds $2^{k-1}\ell$, the two particles cannot be in the same or neighboring boxes.  From the path
integral \eqref{eq:traceN}, but now with the $\x_i(t)$'s constrained to
their respective boxes, we see that \emph{the separated particles feel
  an effective Coulomb-like attractive potential}. However, this can
contribute at worst $-c_3\alpha 2^{-k} \ell^{-1}$ to the energy. But
the Coulomb repulsion is at least $U 2^{-k}\ell^{-1}$, which implies
that the total energy exceeds $2E^{(1)}$ if
\begin{equation*}
 \label{eq:klarge}
U 2^{-k}\ell^{-1} > c_3\alpha 2^{-k} \ell^{-1} + (c_1+c_2) 2^{-2k} \ell^{-2} \,.
\end{equation*}
If this inequality holds for $k=1$, it holds for all $k\geq 2$ as well. Thus, if we 
can deal with the $k=0$ region, we will establish that binding is not possible if 
\begin{equation}
\label{eq:klarge1}
 U\alpha^{-1} > c_3 + (c_1+c_2)/(2\ell\alpha) \,.
\end{equation}

\emph{Step 3. The region of no minimal separation:}
In the $k=0$ region, the Coulomb repulsion is at least $U \ell^{-1}$,  but since there is no minimal separation, we have no direct handle on the possible attraction due to the field. We need a lemma, which we will prove in Step 4. It concerns $E^{(2)}_0(\alpha)$, the energy of the bipolaron with no Coulomb repulsion, i.e., $U=0$;
\begin{equation}
 \label{eq:binding}
E^{(2)}_0(\alpha) \geq 2E^{(1)}(\alpha) - 7\alpha^2/3
\quad \text{for all}\ \alpha \,.
\end{equation}
Assuming this, the total energy in the $k=0$ region exceeds $2E^{(1)}(\alpha)$ provided
\begin{equation*}
 U \ell^{-1} > 7\alpha^2/3 + c_1 \ell^{-2} \,,
\end{equation*}
that is, no binding occurs if
\begin{equation}
\label{eq:ksmall1}
 U\alpha^{-1} > 7\ell\alpha/3 + c_1 /(\ell\alpha) \,.
\end{equation}
Setting the right sides of \eqref{eq:klarge1} and \eqref{eq:ksmall1} equal leads to
 the choice $\ell=c_4/\alpha$ and to absence of binding if $U>C\alpha$, as asserted.

\emph{Step 4. The universal lower bound \eqref{eq:binding}:}
In this step, $U=0$. We first note that 
\begin{equation*}
 \label{eq:twiceenergy}
E^{(1)}(2\alpha)\geq 2 E^{(1)}(\alpha)-4\alpha^2/3 \,.
\end{equation*}
This follows from the lower bound $E^{(1)}(\alpha)\geq -\alpha -\alpha^2/3$ in \cite{LiYa} and the upper bound $E^{(1)}(\alpha)\leq -\alpha$ in \cite{Gu,LePi,LeLoPi}, stated above. So \eqref{eq:binding} will follow if we can prove that
 \begin{equation}
 \label{eq:twoenergy}
E^{(2)}_0(\alpha)\geq E^{(1)}(2\alpha)-\alpha^2 \,.
\end{equation}
For this purpose we go back to the functional integral
\eqref{eq:traceN} and use Schwarz's inequality $\langle e^{a+b} \rangle
\leq \langle e^{2a} \rangle^{1/2} \langle e^{2b} \rangle^{1/2}$, where
$\langle\cdot\rangle$ now denotes expectation with respect to Wiener
measure.  We choose $a$ to be the sum of the two terms $i=j=1$ and
$i=j=2$ in \eqref{eq:trace}, and $b$ to be the mixed terms $i\neq j$.
Since $\langle e^{2a} \rangle^{1/2} \sim e^{-T E^{(1)}(2\alpha)}$ for
large $T$, inequality \eqref{eq:twoenergy} will be achieved if we can
show that $\langle e^{2b} \rangle^{1/2}\sim e^{T\alpha^2}$.  At first
sight, the double path integral $\langle e^{2b} \rangle$ looks like
that for a positronium-like atom, i.e., two particles attracting each
other through a Coulomb force with coupling constant $4\alpha$. The
trouble is that the interaction in \eqref{eq:trace} is at different
times, i.e., $|\x_1(t)-\x_2(s)|^{-1}$. A simple application of Jensen's
inequality, however, shows that we can fix the time difference $u=t-s$
and obtain the bound
$$
\langle e^{2b} \rangle
\!\leq \!\!\int_{-\infty}^\infty \!\!\frac{e^{-|u|} du}2 \!\int\! d\mu^{(2)} \!\exp\!\left[ 4\alpha \!\!\int_0^T \!\!\!\frac{\,dt}{|\x_1(t)-\x_2(t-u)|} \right]
$$
Because of the $T$-periodic time translation invariance of the Wiener
measure, the path integral is, in fact, independent of $u$. Hence we
get the positronium-like answer as a bound.  This completes our
argument for the universal bound (\ref{eq:binding}) and hence the absence of bipolaron
binding for sufficiently large $U/\alpha$.

%%%%%%%%%%%%%%%%%%%%%%%%%%%%%%%%%%%%%%%%%%%%%%%%%%%%%%%%%%%%%%%%%5

\section{Polarons in one dimension}

In 1976 E. P. Gross \cite{Gr} wrote a seminal paper on the polaron in which he discussed a one-dimensional version. Even though it is not very physical, this model has been widely studied \cite{Sp,SmKoVePeDe,VaPeSmDe} and we are able to prove an interesting theorem about it which we report here for the first time. While we have ignored the Fermi statistics up to now, it will play an important role in this section.

There are $N$ particles on the real line at $x_1,\ldots,x_N\in\R$. We assume they are fermions, but with $q$ spin states for each particle. The case $q=N$ is equivalent to saying that Fermi statistics is irrelevant, i.e., one is dealing with boltzons. The Hamiltonian is as in \eqref{eq:ham}, except that $|\kk|^{-1}$ is replaced by $1$; the Coulomb repulsion is thus replaced by the delta function, and the corresponding pair potential is replaced by $U \sum_{i<j} \delta(x_i-x_j)$. In one dimension the delta function is a perfectly good potential of a Schr\"odinger operator.

In this case we can also consider the Pekar approximation, whereby only variational functions of the form $\Psi=\psi(z_1,\ldots,z_N) \cdot \Phi$ are allowed. Here $\Phi$ is a vector in Fock space and $z_j=(x_j,\sigma_j)\in\R\times\{1,\ldots,q\}$ is a space-spin coordinate for an electron.

After minimizing the energy with respect to $\Phi$, one obtains the $N$-particle Pekar-Tomasevich functional (with spin)
\begin{equation}
 \label{eq:pt1d}
\sum_{i=1}^N \int |\nabla_i \psi|^2 \,dZ + U \sum_{i<j} \int \delta(x_i-x_j) |\psi(Z)|^2 \,dZ - \alpha \int_{\R} \rho_\psi(x)^2 \,dx \,.
\end{equation}
Here $\int dZ=\sum_{\sigma_1,\ldots,\sigma_N} \int_\R\cdots\int_\R dx_1\cdots dx_N$, and the density $\rho_\psi$ is defined by
$$
\rho_\psi(x) = N \sum_{\sigma_1,\ldots,\sigma_N} \int_{\R^{N-1}} |\psi(x_1,\sigma_1,\ldots,x_{N-1},\sigma_{N-1},x,\sigma_N)|^2 \,dx_1\cdots dx_{N-1} \,.
$$
We denote by $E^{(N)}_U(\alpha,q)$ the infimum of \eqref{eq:pt1d} over all antisymmetric $q$-state functions $\psi$ with $\int |\psi|^2 dZ=1$. This minimization problem also makes sense for $U=\infty$, where any finite energy wave function $\psi(z_1,\ldots,z_N)$ must vanish if $x_i=x_j$ for any $i\neq j$. We shall prove two facts about this minimization problem.

\begin{theorem}\label{pol1d}
 If $U=0$ and $N/q$ is an integer, then
$$
E^{(N)}_0(\alpha,q)=(N/q) E^{(q)}_0(\alpha,q) \,.
$$
If $U=\infty$, then
$$
E^{(N)}_\infty(\alpha,q)= N E^{(1)}_0(\alpha,1) \,.
$$
\end{theorem}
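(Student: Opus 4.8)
The plan is to treat the two statements separately, since they rely on different mechanisms. For the $U=\infty$ case, the goal is to show that the Pekar–Tomasevich energy is \emph{exactly} additive into $N$ copies of the one-particle energy. The upper bound $E^{(N)}_\infty(\alpha,q)\le N E^{(1)}_0(\alpha,1)$ is the easy half: take $N$ copies of the optimal one-particle density, supported on far-separated intervals $I_1,\dots,I_N$, build an antisymmetric wave function (using distinct spin labels if $q\ge N$, or Slater determinants of spatially disjoint orbitals in general — the hard-core constraint $x_i\ne x_j$ is automatically satisfied since the supports are disjoint), and let the separation tend to infinity; the $\delta$-interaction vanishes, the cross terms in $\int\rho_\psi^2$ vanish, and the energy converges to $N E^{(1)}_0(\alpha,1)$. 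For the matching lower bound, the key point is that the hard-core condition forces $\rho_\psi(x)\le$ something that makes the attraction term split. More precisely, I would use the rearrangement/convexity identity that for any density $\rho$ with $\int\rho=N$, writing $\rho=\sum_i \rho_i$ as a sum of ``single-particle pieces'' via the layer-cake or via the hard-core-induced ordering of particles, one gets $\int\rho_\psi^2\le \sum_i \int\rho_i^2$ together with the fact that the kinetic energy also decomposes — this is exactly the structure that makes the $U=\infty$ problem reduce to $N$ independent one-body problems. The cleanest route: for $U=\infty$, antisymmetry plus the vanishing on the diagonal lets one order the particles $x_1<x_2<\dots<x_N$ on a fundamental domain, and then a standard reflection/folding argument (Dyson–Lenard style, or the one used for the Lieb–Liniger gas) shows the energy is bounded below by $N$ times the energy of a single particle on the line with no constraint, i.e., $N E^{(1)}_0(\alpha,1)$.

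For the $U=0$ case with $N/q$ an integer, the claim is a \emph{clustering into $N/q$ independent $q$-polaron units}. The upper bound: take $N/q$ far-separated copies of the optimal $q$-particle state $\psi^{(q)}$, each using all $q$ spin states, antisymmetrize across clusters (again trivial in the limit of infinite separation since supports are disjoint), and pass to the limit; the energy converges to $(N/q)E^{(q)}_0(\alpha,q)$. The lower bound is the substantive direction and I expect it to be the main obstacle. The natural tool is again a Schwarz/convexity argument on the attraction term: $\int\rho_\psi^2$ is the ``bad'' (negative) term, and one wants to show it cannot be made too negative relative to the kinetic energy once antisymmetry with only $q$ spin states is imposed. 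The idea is that with $q$ spin states, the Pauli principle prevents more than $q$ particles from occupying the ``same'' one-particle mode, so the density cannot concentrate as sharply as it could for $N$ bosons; the optimal configuration is $N/q$ lumps of $q$ particles each. Making this rigorous likely requires the analogue for this $\delta$-interaction model of the argument in Step 4 of the bipolaron proof: split the kinetic energy and the $-\alpha\int\rho^2$ term using a partition of the line (or of configuration space by the ordering of particles), estimate the cross-cluster attraction — which is genuinely long-range here since $\rho^2$ is not a two-body Coulomb term but a local functional, so cross terms $\int_{x\in I, y\in J}$ with $I,J$ far apart actually vanish identically, which is a simplification — and conclude by induction on $N/q$.

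The technically delicate point, and the one I would flag as the main obstacle, is the lower bound in the $U=0$ case: one must rule out the possibility that interleaving particles of different clusters (allowed because different clusters can reuse spin labels) produces an energy strictly below $(N/q)E^{(q)}_0(\alpha,q)$. Equivalently, one must show the $q$-polaron problem is ``subadditive in the right direction'' — that $E^{(N)}_0(\alpha,q)\ge (N/q)E^{(q)}_0(\alpha,q)$ — which is the reverse of the trivial superadditivity one gets for free. I would attempt this via a symmetric-decreasing rearrangement of $\rho_\psi$ (which only decreases $-\alpha\int\rho^2$... no, increases the magnitude — so that is the wrong direction) and instead via the following: use that any antisymmetric $q$-state $\psi$ has $\rho_\psi\le q\,\sigma$ pointwise for some probability density $\sigma$ coming from a single orbital bound (the Pauli bound $\rho_\psi(x)\le q\sum|\phi_i(x)|^2$ is not quite pointwise-controllable, so more care is needed here), combined with the binding inequality \eqref{eq:binding}-type estimate applied clusterwise. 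If a clean convexity inequality of the form ``$\int\rho^2$ subject to the Pauli constraint for $q$ spins is maximized by $N/q$ equal lumps'' can be established — this is plausible and is the one-dimensional analogue of why $U=2\alpha$ is the stability threshold — then the theorem follows; absent that, one falls back on the path-integral representation and a Schwarz inequality splitting the $N$ particles into $N/q$ groups, mirroring Step 4 above, where the self-energy of each group gives $E^{(q)}_0(\alpha,q)$ and the inter-group terms are shown to be bounded below by a constant (in fact by $0$, using that the induced inter-group interaction is attractive-Coulomb-like but here, in 1D with the $\delta$-model, the relevant kernel is such that the groups decouple in the thermodynamic sense).
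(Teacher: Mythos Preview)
Your upper bounds are correct and match the paper's. The gap is in both lower bounds, where you have not identified the actual mechanism.

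For $U=0$, the lower bound is not obtained by clustering, partition of space, path integrals (which do not apply here --- this is the Pekar--Tomasevich \emph{functional}, not the full Hamiltonian), or a pointwise Pauli bound on $\rho_\psi$. The paper's tool is a sharp Lieb--Thirring inequality: for any antisymmetric $q$-state $\psi$ with $N/q$ an integer,
\[
\sum_{i=1}^N \int |\nabla_i \psi|^2 \,dZ \;\ge\; \frac{3}{Nq^2}\left(\int_\R \rho_\psi(x)^2\,dx\right)^2 .
\]
This is proved by introducing an auxiliary potential $-W$, using the sharp one-dimensional Lieb--Thirring bound for the $3/2$-moment of eigenvalues, and optimizing $W=c\rho_\psi$. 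Plugging this into the functional makes the energy a function of the single unknown $I=\int\rho_\psi^2$; minimizing over $I$ yields $E^{(N)}_0(\alpha,q)\ge -\alpha^2 Nq^2/12$, which is exactly $(N/q)E^{(q)}_0(\alpha,q)$ by the explicit computation of the $q$-on energy. Your various proposed routes (rearrangement, Schwarz on groups, ``$\rho^2$ is maximized by $N/q$ lumps'') are either in the wrong direction, inapplicable to this variational problem, or too vague to constitute a proof.

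For $U=\infty$, you miss that the case reduces \emph{to} the $U=0$ case rather than requiring a separate Dyson--Lenard or reflection argument. The paper first shows $E^{(N)}_\infty(\alpha,q)=E^{(N)}_\infty(\alpha,1)$ by replacing any $q$-state $\psi$ vanishing on the diagonals with a spinless antisymmetric $\tilde\psi$ (defined by averaging $|\psi|^2$ over spins and permutations on a simplex, then extending antisymmetrically), using the convexity inequality for gradients to control the kinetic energy while preserving $\rho_\psi$. Then, since a $q=1$ antisymmetric function already vanishes on the diagonals, the $\delta$-repulsion is zero and $E^{(N)}_\infty(\alpha,1)=E^{(N)}_0(\alpha,1)=N E^{(1)}_0(\alpha,1)$ by the first part. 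Your sketch of ordering particles and ``folding'' does not supply this reduction and does not explain how the attractive $-\alpha\int\rho^2$ term is controlled.
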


The field can cause multi-particle binding. A corollary of our first result is that, in the absence of repulsion, the energy per particle in the $q$-on state is at least as low as in any other state. That is, for any $N$ (not necessarily divisible by $q$)
$$
N^{-1} E_0^{(N)}(\alpha,q) \geq q^{-1} E_0^{(q)}(\alpha,q) \,.
$$
To see this, consider the particle number $M=Nq$ and apply Theorem \ref{pol1d} to this case. As a variational candidate for $E_0^{(M)}(\alpha,q)$ we can take $q$ lowest energy $N$-particle states infinitely separated from each other. Then we have $E_0^{(M)}(\alpha,q)\leq q E_0^{(N)}(\alpha,q)$. On the other hand, by Theorem 4.1, $E_0^{(M)}(\alpha,q) = N E_0^{(q)}(\alpha,q)$, and this proves our assertion.
Thus the $q$-on plays a similar role to that of nickel-62 in the curve of nuclear binding energies.

When $U=\infty$, the situation is even more dramatic; there is no binding of any kind.

One may say that in one-dimension antisymmetry trumps the attraction caused by the field. (This is not true in higher dimensions.) Presumably there are finite critical values of $U$ such that $p$\emph{-ons} break apart into $r$\emph{-ons} with $p>r\geq 1$, but we are not able to prove this. There should also be a finite critical value of $U$ above which there is no binding of any kind.

We now turn to the proof of Theorem \ref{pol1d}. We observe that the energy of a $q$\emph{-on} can be computed explicitly, as follows.

\begin{lemma}\label{bosonenergy}
 If $N=q$, then
$$
E^{(q)}_0(\alpha,q) = - \alpha^2 q^3/12 \,.
$$
\end{lemma}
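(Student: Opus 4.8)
\emph{Overall strategy.} I would prove the two matching inequalities $E^{(q)}_0(\alpha,q)\le -\alpha^2q^3/12$ and $E^{(q)}_0(\alpha,q)\ge -\alpha^2q^3/12$ separately. The hypothesis $N=q$ enters only in the upper bound: it is exactly the statement that the $q$ spin labels have enough room to carry all of the antisymmetry, so that a \emph{spatially factorized} (hence symmetric) trial function is admissible.

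\emph{Upper bound.} I would take $\psi(z_1,\dots,z_N)=\bigl(\prod_{i=1}^N\phi(x_i)\bigr)\chi(\sigma_1,\dots,\sigma_N)$, where $\chi$ is the normalized totally antisymmetric function of the $q=N$ spin variables (unique up to phase, and normalized by $\sum_{\vec\sigma}|\chi(\vec\sigma)|^2=1$), so $\psi$ is a genuine antisymmetric $q$-state function, and $\phi$ is a normalized real function on $\R$ to be chosen. Then $\rho_\psi=N|\phi|^2$, the $U$-term in \eqref{eq:pt1d} is absent since $U=0$, and \eqref{eq:pt1d} reduces to the one-dimensional functional $N\int_\R(\phi')^2\,dx-\alpha N^2\int_\R\phi^4\,dx$ with $\|\phi\|_2=1$. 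This is the standard focusing one-dimensional NLS variational problem; its Euler--Lagrange equation $-\phi''-2\alpha N\phi^3=\mu\phi$ is solved by the soliton $\phi(x)=\sqrt{\alpha N/4}\,\operatorname{sech}(\alpha N x/2)$, and inserting it (using $\int\operatorname{sech}^2=2$, $\int\operatorname{sech}^2\tanh^2=2/3$, $\int\operatorname{sech}^4=4/3$) gives kinetic energy $\alpha^2N^3/12$ and field energy $-\alpha^2N^3/6$, for a total of $-\alpha^2N^3/12=-\alpha^2q^3/12$.

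\emph{Lower bound.} Let $\psi$ be any normalized antisymmetric $q$-state function of finite energy. I would apply the Hoffmann-Ostenhof inequality $\sum_{i}\int|\nabla_i\psi|^2\,dZ\ge\int_\R|\partial_x\sqrt{\rho_\psi}|^2\,dx$ to bound \eqref{eq:pt1d} below by $\int_\R(f')^2\,dx-\alpha\int_\R f^4\,dx$ with $f:=\sqrt{\rho_\psi}\ge 0$ and $\int_\R f^2=N=q$. Now invoke the sharp one-dimensional Gagliardo--Nirenberg inequality $\int_\R f^4\le \tfrac{1}{\sqrt3}\bigl(\int_\R(f')^2\bigr)^{1/2}\bigl(\int_\R f^2\bigr)^{3/2}$ (equality precisely for $\operatorname{sech}$-profiles), which with $\int f^2=q$ yields $\int(f')^2\ge 3q^{-3}\bigl(\int f^4\bigr)^2$; completing the square gives $\int(f')^2-\alpha\int f^4\ge 3q^{-3}\bigl(\int f^4-\alpha q^3/6\bigr)^2-\alpha^2q^3/12\ge -\alpha^2q^3/12$. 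Since this holds for all admissible $\psi$, we get $E^{(q)}_0(\alpha,q)\ge -\alpha^2q^3/12$, matching the upper bound. (For $\psi$ of infinite energy there is nothing to check; and in one dimension finiteness of the kinetic term forces $\sqrt{\rho_\psi}\in H^1$ and hence $\rho_\psi\in L^2$ by Sobolev embedding, so the field term is finite and the above manipulations are legitimate.)

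\emph{Main obstacle.} There is essentially one non-routine input: the sharp constant in the one-dimensional Gagliardo--Nirenberg inequality, equivalently the explicit ground-state energy of the cubic Schr\"odinger functional in one dimension. This is classical, but it must be cited (or rederived via symmetric decreasing rearrangement together with the ODE analysis that identifies $\operatorname{sech}$ as the optimizer). The remaining points are bookkeeping, the only place requiring care being the verification that the factorized trial function is indeed antisymmetric --- which is exactly where $N=q$ is used and is why the identity fails, and is replaced by an inequality, once $N>q$.
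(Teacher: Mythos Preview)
Your proof is correct and follows essentially the same route as the paper's. Both use the Hoffmann--Ostenhof inequality to reduce the lower bound to a one-body problem, both use the product spatial function times the totally antisymmetric spin function (which is exactly where $N=q$ enters) to achieve the upper bound, and both require the sharp constant for the one-dimensional cubic NLS functional; the only cosmetic difference is that you phrase this last step via the sharp Gagliardo--Nirenberg inequality and complete the square, whereas the paper cites the Euler--Lagrange analysis identifying the $\operatorname{sech}$ minimizer directly.
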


\begin{proof}[Proof of Lemma \ref{bosonenergy}]
 Whatever $\rho_\psi$ might be, the minimum kinetic energy is realized by the product function $\psi(x_1,\ldots,x_q)=\phi(x_1)\cdots\phi(x_q)$, where $\phi(x)=\sqrt{ \rho_\psi(x)/q}$. Thus \cite{HoHo}
\begin{equation}
 \label{eq:ho}
\sum_{i=1}^N \int |\nabla_i \psi|^2 \,dZ \geq \int_{\R} |\nabla \sqrt{\rho_\psi}|^2 \,dx \,.
\end{equation}
Because there are $q$ spin states, there is an antisymmetric spin function of $q$ variables with which this product function can be multiplied to yield a valid antisymmetric space-spin function. Equality in \eqref{eq:ho} is then achievable.

To evaluate $E^{(q)}_0(\alpha,q)$ we have to find
$$
E^{(q)}_0(\alpha,q) = \inf\left\{ \int_\R \left( q|\phi'|^2 - \alpha q^2 |\phi|^4 \right)\,dx  : \|\phi\|_2=1 \right\} \,.
$$
The function $\phi(x)= (\alpha q/4)^{1/2} (1/\cosh(\alpha q x/2))$ is easily seen to be a solution to the corresponding Euler-Lagrange equation and, indeed, one can prove that it is the unique solution of the above minimization problem (up to translations and a complex phase) \cite{Ke,LT}. This leads to the desired expression for the energy.
\end{proof}

We need a slightly unorthodox version of a Lieb-Thirring inequality, which has been used before in \cite{LidL}:

\begin{lemma}\label{lt}
Assume that $N/q$ is an integer and let $\psi$ be a normalized, antisymmetric $q$-state function. Then
\begin{equation}
 \label{eq:lt}
\sum_{i=1}^N \int |\nabla_i \psi|^2 \,dZ \geq  \frac{3}{N q^2} \left( \int_\R \rho_\psi(x)^2 \,dx \right)^2 \,.
\end{equation}
\end{lemma}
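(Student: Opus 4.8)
The plan is to pass to the one-particle reduced density matrix, use the hypothesis $N/q\in\Z$ together with a convexity argument to reduce to a single Slater determinant of $M=N/q$ spatial orbitals, and then invoke a sharp one-dimensional Lieb--Thirring-type inequality for orthonormal systems, which is the ``slightly unorthodox'' ingredient taken from \cite{LidL}.

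\textit{Reduction to the density matrix.} Since $-\partial_x^2$ does not act on the spin variable, $\sum_{i=1}^N\int|\nabla_i\psi|^2\,dZ=\Tr(-\partial_x^2\,\gamma_\psi)$, where $\gamma_\psi$ is the one-particle density matrix of $\psi$ on $L^2(\R;\C^q)$; antisymmetry gives $0\le\gamma_\psi\le1$ and $\Tr\gamma_\psi=N$. Taking the partial trace over the spin index yields an operator $\Gamma$ on $L^2(\R)$ with $0\le\Gamma\le q$, $\Tr\Gamma=N$, $\Gamma(x,x)=\rho_\psi(x)$, and $\Tr(-\partial_x^2\Gamma)=\Tr(-\partial_x^2\gamma_\psi)$. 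It therefore suffices to prove $\Tr(-\partial_x^2\Gamma)\ge\frac{3}{Nq^2}\bigl(\int_\R\Gamma(x,x)^2\,dx\bigr)^2$ for every such $\Gamma$. Here I would note that $\Gamma\mapsto\Tr(-\partial_x^2\Gamma)$ is linear while $\Gamma\mapsto\bigl(\int_\R\Gamma(x,x)^2\,dx\bigr)^2$ is convex, being the square of the nonnegative convex functional $\Gamma\mapsto\|\rho_\Gamma\|_{L^2}^2$. Hence the difference is concave on the convex set $\{\,0\le\Gamma\le q,\ \Tr\Gamma=N\,\}$, so the inequality need only be checked at the extreme points. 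Because $N/q=:M$ is an integer, these are exactly the operators $qP$ with $P$ an orthogonal projection of rank $M$; writing $P=\sum_{j=1}^M|\varphi_j\rangle\langle\varphi_j|$ with $\{\varphi_j\}$ orthonormal in $L^2(\R)$ and $\sigma:=\sum_{j=1}^M|\varphi_j|^2$, the powers of $q$ cancel and the claim reduces to
\[
\sum_{j=1}^M\int_\R|\varphi_j'|^2\,dx\ \ge\ \frac{3}{M}\left(\int_\R\sigma^2\,dx\right)^2 .
\]

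\textit{The one-dimensional inequality and the main obstacle.} This last inequality is the heart of the matter, and it is sharp: equality is approached by letting the $\varphi_j$ be $M$ widely separated translates of the single bump $\varphi(x)\propto1/\cosh x$, the optimizer in the Gagliardo--Nirenberg inequality $\int|\varphi'|^2\ge3\bigl(\int\varphi^4\bigr)^2$ --- the same constant $3$ that underlies Lemma~\ref{bosonenergy}. Equivalently, taking the Legendre transform in the variable $\int\sigma^2$, it asserts that $\sum_j\int|\varphi_j'|^2-\alpha\int\sigma^2\ge-M\alpha^2/12$ for every $\alpha>0$, i.e.\ that the $U=0$, single-spin Pekar--Tomasevich energy of $M$ fermions is at least $M$ times the one-particle value. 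I would obtain this from the Lieb--Thirring-type estimate of \cite{LidL}. The delicate point, and the reason one needs an ``unorthodox'' version, is that one cannot simply combine the ordinary kinetic Lieb--Thirring bound $\sum_j\int|\varphi_j'|^2\ge K\int\sigma^3$ with the Cauchy--Schwarz step $\int\sigma^3\ge M^{-1}\bigl(\int\sigma^2\bigr)^2$: even the best constant available there satisfies $K\le\pi^2/4<3$ (as one sees by testing on a single $1/\sqrt{\cosh}$ profile), which falls short of the required value $3$. One is therefore forced to exploit orthonormality directly --- the off-diagonal overlaps $\int|\varphi_j|^2|\varphi_k|^2$ with $j\ne k$ must be controlled rather than discarded --- and it is precisely this that the inequality of \cite{LidL} accomplishes.
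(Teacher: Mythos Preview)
Your approach differs substantially from the paper's and leaves the decisive step unproved. The paper never touches density matrices or extreme points; it argues by duality. For any $0\le W\in L^2(\R)$ the variational principle for $N$ fermions with $q$ spin states gives
\[
\sum_i\int|\nabla_i\psi|^2\,dZ-\int_\R W\rho_\psi\,dx\ \ge\ q\sum_{j=1}^{N/q}\lambda_j,
\]
where $\lambda_1\le\lambda_2\le\cdots$ are the eigenvalues of $-d^2/dx^2-W$. Then H\"older, $\sum_{j\le M}|\lambda_j|\le M^{1/3}\bigl(\sum_j|\lambda_j|^{3/2}\bigr)^{2/3}$, combined with the \emph{sharp} $\gamma=3/2$ Lieb--Thirring inequality $\sum_j|\lambda_j|^{3/2}\le\tfrac{3}{16}\int_\R W^2$, produces a lower bound; taking $W=c\rho_\psi$ and optimizing over $c$ gives \eqref{eq:lt}. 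The constant $3$ you were chasing falls out automatically from $3/16$ and the exponents in the optimization.

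Your density-matrix reduction merely converts the $q$-state lemma into the $q=1$ lemma with $M=N/q$ particles, which is precisely the inequality you then say you ``would obtain from \cite{LidL}''. That is the gap: the heart of the lemma remains unproved. (The paper cites \cite{LidL} only as a place where the inequality was \emph{used}, not proved.) Your diagnosis that the kinetic form $\sum_j\int|\varphi_j'|^2\ge K\int\sigma^3$ followed by Cauchy--Schwarz cannot reach the sharp constant is correct, but the remedy is not to ``control off-diagonal overlaps'' directly; it is to go to the eigenvalue side, where the $3/2$-moment constant is known to be optimal in one dimension. Since the dual argument already handles general $q$ in one stroke, the concavity/extreme-point detour buys nothing---and as written it also needs care in infinite dimensions, because the infimum of a concave functional over the non-compact set $\{0\le\Gamma\le q,\ \Tr\Gamma=N\}$ is not a priori attained at an extreme point.
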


\begin{proof}[Proof of Lemma \ref{lt}]
 Let $V=-W$ be a negative potential in $L^2(\R)$ and denote the eigenvalues of the one-dimensional Schr\"odinger operator $-\frac{d^2}{dx^2}-W$ by $\lambda_1\leq\lambda_2\leq\ldots$. If there is only a finite number $M$ of negative eigenvalues, we set $\lambda_{M+1}=\lambda_{M+2}=\ldots=0$. By the variational principle (see, e.g., \cite[Thm. 12.5]{LiLo}) we have
$$
 \sum_{i=1}^N \int \left( |\nabla_i \psi|^2 - W(x_i) |\psi(Z)|^2 \right) \,dZ \geq q \sum_{j=1}^{N/q} \lambda_j \,.
$$
By H\"older's inequality and the sharp Lieb-Thirring inequality \cite{LT} for $3/2$-moments of the eigenvalues, we find
$$
\sum_{j=1}^{N/q} |\lambda_j| \leq \left(\frac Nq\right)^{1/3} \left( \sum_{j=1}^{\infty} |\lambda_j|^{3/2} \right)^{2/3}
\leq \left(\frac Nq\right)^{1/3} \left( \frac3{16} \int_\R W(x)^2 \,dx \right)^{2/3} \,.
$$
To summarize, we have shown that
$$
\sum_{i=1}^N \int |\nabla_i \psi|^2 \,dZ \geq \int_\R W(x) \rho_\psi(x) \,dx - N^{1/3} q^{2/3} \left( \frac3{16} \int_\R W(x)^2 \,dx \right)^{2/3}
$$
for any $0\leq W\in L^2(\R)$. By choosing $W= c\rho_\psi$ and optimizing over the constant $c$, we obtain the desired bound \eqref{eq:lt}.
\end{proof}

\begin{proof}[Proof of Theorem \ref{pol1d}] 
 \emph{The case $U=0$.} We substitute the bound \eqref{eq:lt} into the expression \eqref{eq:pt1d} for the energy. This lower bound only depends on the unknown quantity $I=\int \rho_\psi(x)^2 \,dx$. By minimizing this expression with respect to $I$ we arrive at the lower bound $E^{(N)}_0(\alpha,q)\geq -\alpha^2 N q^2/12$. According to Lemma \ref{bosonenergy} this coincides with $N E^{(q)}_0(\alpha,q)/q$.

To conclude the proof, we need an upper bound of the same kind. This is easily done by noting that we can make a state of $N/q$ widely separated $q$\emph{-ons}. In the limit that the separation goes to infinity we obtain the upper bound of $N/q$ times the energy of a single $q$\emph{-on}.

\emph{The case $U=\infty$.} This case is easy in view of what we just proved. The electrons, regardless of their spin, cannot get past each other, i.e., the $N$-particle wave function vanishes whenever $x_i=x_j$ for some $i\neq j$. The configuration space is thus decomposed into a union of simplices of which $S=\{x_1<x_2<\ldots<x_N\}$ is representative.

Given a normalized $q$-state wave function $\psi$ we define a normalized, antisymmetric $1$-state wave function $\tilde \psi$ as follows: For $x\in S$ we set
$$
\tilde\psi(x) := \left( \frac{1}{N!} \sum_{\sigma_1,\ldots,\sigma_N} \sum_\pi |\psi(z_{\pi(1)},\ldots,z_{\pi(N)})|^2 \right)^{1/2}
$$
and we extend $\tilde\psi$ \emph{antisymmetrically} to the other simplices. A similar construction is used in \cite{LiMa}. The crucial point is that if $\psi$ has finite kinetic energy and vanishes on the boundaries of the simplices, then $\tilde\psi$ has finite kinetic energy as well and vanishes on the boundaries of the simplices. Moreover, by the convexity inequality for gradients \cite[Thm. 7.8]{LiLo} we have
$$
\sum_{i=1}^N \int_{\R^N} |\nabla_i \tilde\psi|^2 \,dx \leq \sum_{i=1}^N \int |\nabla_i \psi|^2 \,dZ \,.
$$
On the other hand, $\rho_{\tilde\psi}=\rho_\psi$, and therefore the total energy of $\psi$ is bounded from below by that of $\tilde\psi$. Note that these two energies coincide if the original $\psi$ was an antisymmetric function of space times a symmetric function of spin. To summarize, we have shown that $E^{(N)}_\infty(\alpha,q) = E^{(N)}_\infty(\alpha,1)$. Note that in the $q=1$ case, the repulsion energy vanishes because the antisymmetry forces the wave function to vanish on the boundaries of the simplices. Thus 
$E^{(N)}_\infty(\alpha,1)=E^{(N)}_0(\alpha,1)$, and the conclusion follows from the first part of the theorem.
\end{proof}

\subsection*{Acknowledgments}
 Partial financial support from the U.S.~National Science Foundation through grants PHY-0965859 (E.L.) and PHY-0845292 (R.S.) are gratefully acknowledged. L.T. would like to thank the PIMS Institute, University of British Columbia, for their hospitality and support.

%%%%%%%%%%%%%%%%%%%%%%%%%%%%%%%%%%%%%

\bibliographystyle{amsalpha}

%%%%%%%%%%%%%%%%%%%%%%%%%%%%%%%%%%%%%%%%%%%%%%%%%%%%%%%%%%%%%%%%%%%%%%%%%%%%%%%%%%%%%%%%%%%

\end{document}